\makeatother \pagestyle{headings}
\newtheorem{theorem}{\underline{Theorem}}
\newtheorem{lemma}{\underline{Lemma}}
\begin{document}
	
\title{Data Sensing and Offloading in Edge Computing Networks: TDMA or NOMA?}
	
\author{Zezu Liang,~\IEEEmembership{Student Member,~IEEE}, Hanbiao Chen,~\IEEEmembership{Student Member,~IEEE}, \\
Yuan Liu,~\IEEEmembership{Senior Member,~IEEE}, and Fangjiong Chen,~\IEEEmembership{Member,~IEEE}
\thanks{Z. Liang is with the Department of Information Engineering, The Chinese University of Hong Kong, Hong Kong (e-mail: lz017@ie.cuhk.edu.hk).}
\thanks{H. Chen, Y. Liu, and F. Chen are with the School of Electronic and Information Engineering, South China University of Technology, Guangzhou, 510641, P. R. China (email: eechb@mail.scut.edu.cn, eefjchen@scut.edu.cn, eeyliu@scut.edu.cn). }
}
\maketitle

\begin{abstract}
With the development of Internet-of-Things (IoT), we witness the explosive growth in the number of devices with sensing, computing, and communication capabilities, along with a large amount of raw data generated at the network edge. Mobile (multi-access) edge computing (MEC), acquiring and processing data at network edge (like base station (BS)) via wireless links,  has emerged as a promising technique for real-time applications. In this paper, we consider the scenario that multiple devices sense then offload data to an edge server/BS, and the offloading throughput maximization problems are studied by joint radio-and-computation resource allocation, based on time-division multiple access (TDMA) and non-orthogonal multiple access (NOMA) multiuser computation offloading. Particularly, we take the sequence of TDMA-based multiuser transmission/offloading into account. The studied problems are NP-hard and non-convex. A set of low-complexity algorithms are designed based on decomposition approach and exploration of valuable insights of problems. They are either optimal or can achieve close-to-optimal performance as shown by simulation. The comprehensive simulation results show that the sequence-optimized TDMA scheme achieves better throughput performance than the NOMA scheme, while the NOMA scheme is better under the assumptions of time-sharing strategy and the identical sensing capability of the devices.
\end{abstract}

\begin{IEEEkeywords}
Mobile edge computing (MEC), data sensing, multiuser computation offloading, non-orthogonal multiple access (NOMA), resource allocation.
\end{IEEEkeywords}

\section{Introduction}\label{se1}
With the unprecedented development of Internet-of-Things (IoT) technology, billions of IoT devices endowed with sensing, computing, and wireless communication capabilities have been emerging, which hastens a variety of IoT applications such as smart metering \cite{VC2011}, smart home \cite{DH2010}, and smart manufacturing \cite{SVer2017}. In addition to conventional sensor networks that require the deployment of a large number of sensor nodes \cite{IF2002}, the ubiquitous usage of smartphones, wearable devices, and in-vehicle sensing devices equipped with various sensors facilitates the development of healthcare, traffic monitoring, and smart cities as they can collect data at an adequate temporal-spatial granularity \cite{ZHOU2018}. All these, sensor nodes or mobile devices, generate a large volume of data that needs to be analyzed and used for various applications. Traditionally, these devices process the data locally or transmit them to the remote cloud center. However, the sensing data become increasingly complicated and some of them even have explicit lifetime of utility, which is challenging for devices to process due to their limited resources of energy and computation \cite{KU2013}. Besides, uploading large amount of data from edge devices to the cloud will cause a bandwidth utilization bottleneck, resulting in unpredictable latency and energy consumption as they get clogged.

Fortunately, mobile (multi-access) edge computing (MEC) bridges the gap between the devices and the cloud, as it provides computing services at the network edge and thus overcomes the long backhaul latency of cloud computing \cite{MAO2017, YU2018}. Meanwhile, the migration of data processing from resource-limited devices to edge server greatly relieves the energy limitation of the devices. However, the practical implementation of MEC requires joint optimization of finite communication-and-computational resources \cite{SAR2015, chen2016, MengyuWCL, you2017, you2018, wang2017, Ymao2017, cao2019, tran2019, liang2019}. In \cite{SAR2015}, Sardellitti $\emph{et al}$ studied radio-and-computational resource allocation in both the single-user and multiuser scenarios for overall energy consumption minimization. Game-theoretic based approaches were adopted in \cite{chen2016,MengyuWCL} for achieving offloading decisions. You $\emph{et al}$ introduced finite cloud computation capacity to upper-bound the amount of sum offloaded data but neglected the cloud computing latency and computational resource allocation \cite{you2017}. This work was extended in \cite{you2018} to the case of asynchronous offloading, where the devices have heterogeneous data-arrival time instants and deadlines. \cite{wang2017} and \cite{Ymao2017} jointly considered MEC and content caching, where the sum power consumption of the devices and the MEC server was optimized in \cite{wang2017} while \cite{Ymao2017} optimized the total revenue of the network. In \cite{cao2019}, the communication-and-computational resource allocation was optimized between the user, the helper and the server based on partial offloading and binary offloading. A multi-cell, multi-server MEC system was considered in \cite{tran2019} for sum of reductions in latency and energy consumption. Moreover, Liang $\emph{et al}$ addressed two practical issues including the I/O interference in parallel computing and the overhead of result downloading phase targeting sum throughput maximization and sum energy consumption minimization \cite{liang2019}.

On the other hand, to cope with the bandwidth utilization bottleneck results from the contradiction between limited spectrum resources and vast connectivity of devices, nonorthogonal multiple access (NOMA) is emerging as a promising technology \cite{DAI2015, DING2017}. The exploitation of power domain in NOMA allows multiple devices to communicate over the same frequency-time resource simultaneously, which however introduces inter-user interference. In order to decode the signals of all devices, the base station (BS) adopts successive interference cancellation (SIC) as multiuser detection technology. Multiple studies focused on integrating NOMA and MEC to achieve enormous potential benefits. In \cite{AK2018}, the authors first proposed an edge computing aware NOMA model in which the user clustering, transmit power, radio and computation resources were optimized to minimize the energy consumption. Energy-efficient NOMA-MEC was also considered in \cite{WANG2019} and \cite{PAN2019}, where the SIC order for partial offloading and binary offloading was optimized in \cite{WANG2019} while \cite{PAN2019} applied NOMA-based transmission in both task uploading and result downloading. The analytical results in \cite{DZ2019} demonstrated NOMA can efficiently reduce the latency and energy consumption of MEC. Accordingly, lots of works considered delay minimization in NOMA-MEC \cite{DING2018, WU2018, LP2019}. In addition, NOMA-MEC has also been studied in cooperative edge computing \cite{YuwenTCOM}.

\subsection{Contribution and Organization}
In this paper, we focus on throughput maximization in a multiuser MEC system based on TDMA and NOMA offloading. Unlike prior MEC works that only consider task offloading, we consider data sensing for task generation before task offloading, leading to the three-phase of ``sensing-offloading-computing". The data-sensing phase tightly couples the later two phases, which significantly changes the scheme designs for multiuser MEC systems. It is worth noting that in practice, especially in IoT networks, mobile devices need to sense data from the environment for task generation. We adopt the ``sense-then-offload" protocol to coordinate the data sensing and offloading process, where the devices first sense data from the environment and then offload the sensed data to the edge server for computing \cite{LI2019}. This gives rise to a tradeoff between sensing and offloading. That is, a short sensing time leads to few data to offload, while a long sensing time can generate more data but reduce the time of data offloading. Moreover, there exists another important problem of transmission sequence (who transmits first?) which has not been charted before, i.e., the devices offload later can sense more data but have less offloading time, and vice versa.
Thus it requires a new design of joint radio-and-computational resource allocation.
The main contributions of this paper are summarized as follows:
\begin{itemize}
\item In the TDMA-based offloading scheme, we consider the sequence of multiuser offloading to improve the system performance. To our best of knowledge, this is the first attempt that optimizes the sequence of TDMA-based offloading in MEC. We formulate an NP-hard problem of jointly optimizing the offloading sequence, lengths of sensing-offloading-computing time, and computation resource allocation, with the goal of offloading throughput maximization. First, we theoretically prove that under the assumption of identical sensing rates, the optimal offloading sequence follows the ascending order of the devices' weighted transmission rates. Based on this, the problem is then optimally solved with closed-form solutions. Moreover, the asynchronous computing case is studied for the TDMA-based scheme.
\item In the NOMA-based offloading scheme, the joint problem of sensing-offloading-computing time division, transmit power control, and computation resource allocation is studied. We show that the proposed method for the TDMA case can be extended to the NOMA case, and thus the optimal solutions are obtained by an efficient algorithm. In addition, we study a special case that all devices have the same sensing capability and adopt the time-sharing strategy to achieve  higher offloading throughput.
\item Our simulation results show that, the sequence-optimized TDMA scheme can achieve better throughput performance than the NOMA scheme. When the devices have identical sensing capability, the NOMA scheme with time-sharing strategy performs best.
\end{itemize}

The rest of this paper is organized as follows. Section \ref{se2} presents the system model and the throughput maximization problems based on TDMA and NOMA, respectively. Sections \ref{se3} and \ref{se4} propose algorithms for solving the corresponding problems. Section \ref{se5} provides extensive simulation results and discussions. Finally, Section \ref{se6} concludes the paper.

\section{System Model and Problem Formulation}\label{se2}
As shown in Fig. \ref{model}, we consider a multiuser MEC system consisting of $N$ devices and a BS integrated with an edge server. We assume that the wireless channels between the BS and devices are quasi-static flat-fading so that the channel gains are constant in each frame but can vary from one frame to another. It is also assumed that the BS has perfect knowledge of network information for central processing.

We consider a ``\emph{sense-then-offload}" protocol\footnote{The setting of simultaneous sensing and offloading requires a causality constraint between sensing and transmission, which is a complicated problem even for the single-user case. If in the multiuser case, there also exists the transmission collision problem among multiple users for the simultaneous sensing and transmitting. For simplifying system designs, we adopt the ``sense-then-transmit" protocol and leave the simultaneous case as our future work.}, where each device first senses data from the environment and then offloads the sensed data to the BS. All the devices have to complete their data sensing, offloading, and edge execution within a frame of duration $T$ so as to meet the real-time \mbox{requirement}. The whole process is divided into four phases: 1) data sensing by devices, 2) data offloading by devices based on TDMA or NOMA, 3) parallel computing at the BS, and 4) computational-result downloading from the BS to devices. Since the computation results are relatively in small sizes in general, the fourth phase of downloading is neglected in this paper. We next describe the first three phases in details.

\begin{figure}[t]
	\begin{centering}
		\includegraphics[scale=0.36]{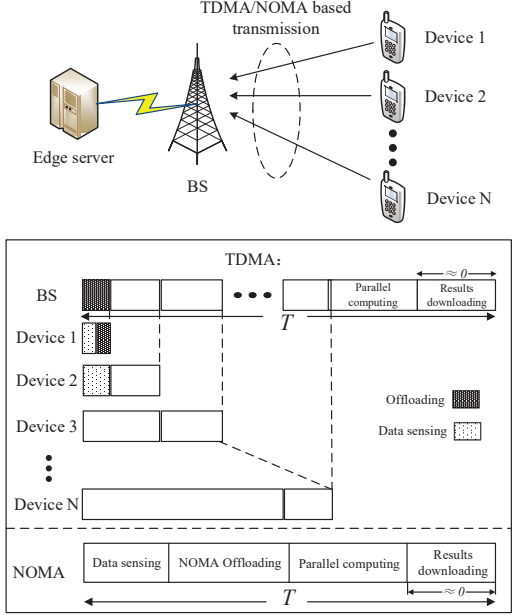}
		\caption{``Sense-then-offload" MEC systems based on TDMA and NOMA. }\label{model}
	\end{centering}
\end{figure}

\subsection{TDMA-Based Offloading}
In the TDMA case, we divide the whole frame $T$ into $N+1$ slots, each with time length denoted by $\{t_1, ..., t_N, t^c\}$ respectively. In the first $N$ slots, $N$ devices offload their sensed data to BS one by one according to a specific device-scheduling sequence, while the last slot $N+1$ is used for parallel computing at the server. For simplicity, we assume that the computing starts at the last slot, i.e., after the server receives all the sensed data from all the devices. The extension to asynchronous computing among devices will be elaborated in Section III-D. Under the above assumption, the total time constraint can be written as
\begin{align}
\text{(Time constraint in TDMA):}\quad\sum_{i=1}^{N}t_i+t^c\leq T. \label{eqn:r01}
\end{align}
As depicted in Fig. \ref{model}, each device senses data at the beginning of the frame, and continues until the slot scheduled for its data offloading arrives. The device scheduled at the first slot has to complete data sensing and offloading within the slot. After offloading is complete, devices will keep silent for saving energy. Under the sense-then-offload protocol, the scheduling sequence for devices' offloading is important for system performance, since a device scheduled at a latter slot to offload can have more time to sense data compared to those scheduled at earlier slots. To facilitate the exposition in the sequel, we define the scheduling variable $x_{n,i}$ as
\begin{equation}\label{eqn:r02}
x_{n,i}=\left\{
\begin{aligned}
& 1,\quad \text{if slot}\;i\;\text{is assigned to device}\;n\;\text{for offloading},  \\
& 0,\quad \text{otherwise}.\\
\end{aligned}
\right.
\end{equation}
The variable should satisfy the following slot assignment constraints:
\begin{align}
&\sum_{i=1}^N x_{n,i}=1,\quad \forall n, \label{eqn:r03}\\
&\sum_{n=1}^N x_{n,i}=1,\quad \forall i. \label{eqn:r04}
\end{align}

We consider a fixed sensing rate $s_n$ of device $n$. Then, given the offloading sequence $\{x_{i,n}\}$ satisfying \eqref{eqn:r03}, \eqref{eqn:r04}, the sensed data size of the device scheduled at slot $1$ is obtained as  $x_{n,1}s_n t_1^s$, where $t^s_1$ is the sensing time in the first slot, with $0\leq t_1^s\leq t_1$, and the sensed data size of the device scheduled at slot $i \leq 2$ is given by $\sum_{k=1}^{i-1}x_{n,i} s_n t_k$.

Let $P_n$, $h_n$, and $N_0$ denote the transmit power of device $n$, the channel gain between the BS and device $n$,  and the noise power, respectively. Then, the achievable uplink transmission rate of device $n$ can be expressed as
\begin{align}
r_{n}=\log_2\left(1+\frac{P_nh_n}{N_0}\right),
\end{align}
and the amount of data transmitted by device $n$ at slot $i$ is given by
\begin{align}\label{eqn:r05}
R_{n,i}=\begin{cases}
x_{n,1}(t_1-t_1^s) r_n, &\text{if~}i=1, \\
x_{n,i}t_i r_n, &\text{if~} i\geq 2.
\end{cases}
\end{align}

Based on the sense-then-offload protocol, the device can sense data before offloading, implying that the amount of transmission data is upper bounded by the amount of sensed data for each device. This introduces the following transmission data constraints:
\begin{align}
R_{n,1}&\leq x_{n,1}s_n t_1^s,\quad \forall n, \label{eqn:r06}\\
\sum_{i=2}^{N} R_{n,i}&\leq \sum_{i=2}^{N}\sum_{k=1}^{i-1}x_{n,i} s_n t_k,\quad \forall n. \label{eqn:r07}
\end{align}

After receiving the offloaded data from all devices, the BS executes them in parallel. We consider that the server has total computation resource $C$ (offloaded bits per second). Let $C_n$ denote the computation resource assigned to device $n$. Then we have the following constraints:
\begin{align}
\sum\limits_{n=1}^{N} {C_n} &\leq C, \label{eqn:r08} \\
\frac{R_{n,i}}{C_n} &\leq t^c,\quad \forall n,i. \label{eqn:r09}
\end{align}

Our objective is to maximize the sum offloading throughput of the system by joint offloading scheduling, three-phase time allocation, and the computation resource allocation. Denote $\mathbf{X}=\{x_{n,i}\}$ and $\mathbf{Y}=\{t_i, t_1^s, t^c, C_n\}$. The sum offloading throughput maximization problem can be mathematically formulated as
\begin{subequations}
	\begin{align}
	\quad{\mathbf{P1}:}~~
	\max_{\mathbf{X}, \mathbf{Y}\ge 0}\quad  & \sum_{n=1}^{N} \sum_{i=1}^{N} \omega_n R_{n,i} \\
	{\rm s.t.} \quad~\ & \sum_{i=1}^{N}t_i+t^c\leq T, \label{eqn:r10b}\\
	& \sum_{i=1}^N x_{n,i}=1,\quad \forall n, \label{eqn:r10c}\\
	& \sum_{n=1}^N x_{n,i}=1,\quad \forall i, \label{eqn:r10d}\\
	& R_{n,1}\leq x_{n,1}s_n t_1^s,\quad \forall n, \label{eqn:r10e}\\
	& \sum_{i=2}^{N} R_{n,i}\leq \sum_{i=2}^{N}\sum_{k=1}^{i-1}x_{n,i} s_n t_k,\quad \forall n, \label{eqn:r10f}\\
	&\sum\limits_{n=1}^{N} {C_n} \leq C, \label{eqn:r10g} \\
	&\frac{R_{n,i}}{C_n} \leq t^c,\quad \forall n,i, \label{eqn:r10h} \\
	& 0 \leq t_1^s \leq t_1,  \label{eqn:r10i}\\
	& x_{n,i} \in \{0,1\},\quad  \forall n,i,\label{eqn:r10j}
	\end{align}
\end{subequations}
where $\omega_n\geq 0$ denotes the weight assigned to device $n$ based on its priority and constraint \eqref{eqn:r10i} specifies the range of sensing time in the first slot.

\subsection{NOMA-Based Offloading}
In the NOMA case, the time frame $T$ is composed of three parts for all devices:
\begin{align}\label{eqn:r11}
\text{(Time constraint in NOMA):}\quad t^s+t^o+t^c\leq T,
\end{align}
where $t^s$, $t^o$, $t^c$ are the time of sensing, offloading, and computing, respectively. For simplicity, we assume the identical sensing time and offloading time for each device in the NOMA case\footnote{To accommodate the case of different sensing time and offloading time among devices \cite{NOMA_revision}, we can divide the total sensing-and-offloading time into $N+1$ time slots, where the $i$-th scheduled device senses data during the slots $1, \cdots, i$ and can offload data at slots $i+1, \cdots, N+1$. For each slot $i+1$, the scheduled devices $1, \cdots, i$ use NOMA to offload data concurrently. In this case, the device scheduling order, time-slot division, as well as the devices' power control in each slot, also need to be considered for efficient offloading designs. We leave the asynchronous case study as our future work.}. The reason is that if a device starts or finishes transmission earlier than other devices, all devices are unable to achieve a steady rate during the transmission process \cite{DING2018, LP2019}.

During the NOMA-based offloading, all the devices transmit their sensed data simultaneously and BS decodes the superimposed message based on the SIC order. Without loss of generality, we assume that $h_1\geq h_2\geq ... \geq h_N$ and the NOMA decoding follows the descending order of $h_n$'s. That is, when decoding the signal of device $n$, all other devices ($n+1, \cdots , N$) are treated as interference. Thus, the achievable uplink transmission rate of the $n$-th device is expressed as
\begin{equation}\label{eqn:r12}
r_{n}=\log_{2}\left(1+\frac{P_{n}h_{n}}{N_0+\sum_{k=n+1}^{N}P_k h_k}\right).
\end{equation}

According to the sense-then-offload protocol, the optimal amount of offloaded data of each device should be equal to the amount of data it senses. Otherwise, there exists sensed data that is not offloaded, which wastes time resource to sense and degrades the system performance. Therefore, for optimality, the amount of sensed data and offloaded data should meet
\begin{align}
s_n t^s = r_n t^o, \quad \forall n. \label{eqn:r13}
\end{align}
Since sensing rates $s_n$'s are different among devices, to satisfy conditions \eqref{eqn:r13}, the transmit power control of $P_n$ is needed. With this, the optimal transmit power of device $n$ is derived as (see Appendix A)\footnote{In the TDMA case, the optimal power control of each device for throughput maximization is to transmit at the peak power.}
\begin{align}
P_n\left(\frac{t^s}{t^o}\right)=\frac{N_0}{h_n}(2^{\frac{s_n t^s}{t^o}}-1)2^{\sum\limits_{k=n+1}^N\frac{s_k t^s}{t^o}}. \label{eqn:r14}
\end{align}

The edge computation capacity constraint in NOMA system is the same as \eqref{eqn:r08} while the  computing time $t^c$ should satisfy
\begin{align}
\quad &\frac{t^o r_{n}}{C_n} \leq t^c,\quad \forall n. \label{eqn:r15}
\end{align}

Based on the optimal power control in \eqref{eqn:r14}, the sum offloading throughput is equivalent to the sum amount of sensed data $\sum_{n=1}^N s_n t^s$. Similar to the TDMA case, the sum offloading throughput maximization problem based on NOMA transmission can be formulated as
\begin{subequations}
	\begin{align}
	{\mathbf{P2}:}~~
	\max_{t^s, t^o, t^c, C_n, \forall n} ~~  & \sum\limits_{n=1}^N s_n t^s  \\
	{\rm s.t.}\quad\quad  ~~ & t^s+t^o+t^c\leq T, \label{eqn:r16b} \\
	& 0 \leq P_n\left(\frac{t^s}{t^o}\right) \leq \overline{P_n}, \quad \forall n,\label{eqn:r16c} \\
	& \sum\limits_{n=1}^{N} {C_n} \leq C, \label{eqn:r16d} \\
	& \frac{t^o r_{n}}{C_n} \leq t^c,\quad \forall n, \label{eqn:r16e}
	\end{align}
\end{subequations}
where $\overline{P_n}$ denotes the maximum transmit power of device $n$.

\section{Offloading Throughput Maximization for TDMA Case}\label{se3}
With binary variables $\bf{X}$ and non-linear constraints \eqref{eqn:r10e}, \eqref{eqn:r10f}, \eqref{eqn:r10h}, Problem (P1) is a mixed-integer nonlinear programming problem that is difficult to solve exactly. To this end, we develop an efficient algorithm to obtain a suboptimal solution, which proceeds in two stages: First, we solve the problem of joint sensing-and-offloading resource allocation for a given computation resource allocation. In the second stage, we integrate the results derived from last stage and solve the residual computation resource allocation problem.

First, given the computation resource allocation $(t^c, \{C_{n}\})$, Problem (P1) can be written as
\begin{subequations}
\begin{align}
{\mathbf{P1.1}:}~
\max_{\substack{t_1^s, t_i \ge 0, \forall i,\\ \{x_{n,i}\}}} ~ &\sum_{n=1}^{N} \sum_{i=1}^{N} \omega_n R_{n,i} \\
{\rm s.t.} ~ \quad &\sum_{i=1}^{N}t_i+t^c\leq T, \label{eqn:r18a}\\
&\sum_{i=1}^N x_{n,i}=1,\quad \forall n, \label{eqn:r18b}\\
&\sum_{n=1}^N x_{n,i}=1,\quad \forall i, \label{eqn:r18c}\\
&R_{n,1}\leq x_{n,1}s_n t_1^s,\quad \forall n, \label{eqn:r18d}\\
&\sum_{i=2}^{N} R_{n,i}\leq \sum_{i=2}^{N}\sum_{k=1}^{i-1}x_{n,i} s_n t_k,\quad \forall n, \label{eqn:r18e}\\
&0 \leq t_1^s \leq t_1,  \label{eqn:r18f}\\
&x_{n,i} \in \{0,1\},\quad  \forall n,i. \label{eqn:r18g}
\end{align}
\end{subequations}
Clearly, Problem (P1.1) is a mixed integer linear programming (MILP) problem, which allows more tractable analysis on the sensing-and-offloading time allocation $(t_1^s, \{t_i\})$ and device offloading sequence $\{x_{n,i}\}$. In the following two subsections, we derive the closed-form optimal $(t_1^s, \{t_i\})$ and propose an effective solution of $\{x_{n,i}\}$,  respectively.

\subsection{Closed-Form Solution to Problem (P1.1) for a Given Offloading Sequence}\label{se3A}
With given $\{x_{n,i}\}$ satisfying \eqref{eqn:r18b}, \eqref{eqn:r18c}, we can find that Problem (P1.1) is a linear programming (LP) problem with a bounded feasible region. Then, there exists an optimal solution located at an extreme point \cite{book}. According to the extreme point definition, there are $N+1$ linearly independent active constraints at $\mathcal{T}$, where $\mathcal{T}\triangleq \{t_1^s, t_1, t_2, \cdots, t_N\}\in \mathbb{R}^{N+1}$. For notation simplicity, we use $n_i$ to denote the index of the device that is scheduled at slot $i$ for the given $\{x_{n,i}\}$,  i.e., $x_{n_i, i}=1$, $\forall i$. Then, the closed-form optimal $\mathcal{T}$ for a given $\{x_{n,i}\}$ can be derived by letting the $N+1$ linearly independent time constraint \eqref{eqn:r18a} and transmission data constraints \eqref{eqn:r18d} and \eqref{eqn:r18e} being active, which is given as follows:
\begin{align}
t_1^s&=\frac{(T-t^c)\prod_{j=1}^N r_{n_j}}{\prod_{j=1}^N (s_{n_j}+r_{n_j})}, \label{eqn:r18}\\
t_1&=\frac{(T-t^c)\prod_{j=2}^N r_{n_j}}{\prod_{j=2}^N (s_{n_j}+r_{n_j})}, \label{eqn:r19}\\
t_i&=\frac{(T-t^c)s_{n_i}\prod_{j=i+1}^N r_{n_j}}{\prod_{j=i}^N (s_{n_j}+r_{n_j})},\quad 2 \leq i \leq N-1, \label{eqn:r20}\\
t_N&=\frac{(T-t^c)s_{n_N}}{(s_{n_N}+r_{n_N})}, \label{eqn:r21}
\end{align}
where $s_{n_i} = \sum_{n=1}^N x_{n,i} s_n$ and $r_{n_i}=\sum_{n=1}^N x_{n,i}r_n$ denote the sensing rate and the transmission rate of the device that is scheduled at slot $i$, respectively.

\subsection{Sequence of Multiuser Offloading}
In above subsection, we derive the optimal sensing and offloading time allocation $(t_1^s, \{t_i\})$ for a given offloading sequence $\{x_{n,i}\}$. In this subsection, we propose an effective device-offloading sequence $\{x_{n,i}\}$ that is based on the following important observation of Problem (P1).
\begin{theorem} \label{th2}
Consider the case of homogeneous sensing rates $s_1=s_2=\cdots = s_N$. Then, the optimal device offloading sequence for Problem (P1) follows the ascending order of the device's weighted transmission rate $\omega_n r_n$; or equivalently, the optimal scheduling variable $x_{n,i}$ is given by
\begin{align}
x_{n,i}=\begin{cases}
1, &\text{$\omega_n r_n$ is the $i$-th smallest}   \label{eqn:r101}\\
0, &\text{otherwise}
\end{cases},  \quad \forall n, i.
\end{align}
\end{theorem}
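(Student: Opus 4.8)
My approach would be an exchange (adjacent‑transposition) argument that reduces the combinatorial choice of offloading sequence to a one‑line comparison. The first step is to make the sequence–dependence of the objective explicit. Fix any sequence $(n_1,\dots,n_N)$ and any computing time $t^c$; substituting the homogeneous rate $s_n\equiv s$ into the closed‑form allocation \eqref{eqn:r18}--\eqref{eqn:r21} of Section~\ref{se3A} and writing $q_n:=\frac{r_n}{s+r_n}\in(0,1)$, one verifies that the data offloaded in every slot collapses to the single form
\begin{align}
R_{n_i,i}=(T-t^c)\,s\prod_{j=i}^{N}q_{n_j},\qquad i=1,\dots,N, \label{eqn:prf1}
\end{align}
where for $i=1$ this is just $s\,t_1^s$ because \eqref{eqn:r18d} is active. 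Hence the objective of (P1) for a given sequence equals $(T-t^c)\,s\,\Phi$ with $\Phi:=\sum_{i=1}^{N}\omega_{n_i}\prod_{j=i}^{N}q_{n_j}$, so the sequence enters only through the purely combinatorial number $\Phi$, and the prefactor $(T-t^c)s$ is the same for every sequence.

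The second step is the swap. Take two adjacent slots $i,i+1$ occupied by devices $a:=n_i$ and $b:=n_{i+1}$, and let $\pi'$ be the sequence obtained by exchanging them. Since $\prod_{j\ge k}q_{n_j}$ is symmetric in $q_a,q_b$ for every $k\le i$ and free of $q_a,q_b$ for $k>i+1$, only the two $\Phi$‑terms at slots $i$ and $i+1$ change; with $P:=\prod_{j>i+1}q_{n_j}$ they contribute $P q_b(\omega_a q_a+\omega_b)$ before and $P q_a(\omega_b q_b+\omega_a)$ after the swap. Using $1-q_n=\frac{s}{s+r_n}$, the difference simplifies to
\begin{align}
\Phi(\pi')-\Phi(\pi)=\frac{P\,s}{(s+r_a)(s+r_b)}\,\big(\omega_a r_a-\omega_b r_b\big). \label{eqn:prf2}
\end{align}
Thus exchanging an adjacent pair strictly increases $\Phi$ exactly when the earlier‑scheduled device has the strictly larger weighted rate $\omega_n r_n$, and leaves $\Phi$ unchanged when the two are equal.

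The conclusion then follows by a bubble‑sort argument: starting from any sequence, repeatedly exchange an adjacent ``inverted'' pair; each such exchange strictly decreases the number of inversions while not decreasing $\Phi$, so after finitely many steps the sequence is sorted in nondecreasing order of $\omega_n r_n$, which therefore maximizes $\Phi$ and, by the first step, the objective. This is precisely the assignment \eqref{eqn:r101}, with ties among $\omega_n r_n$ broken arbitrarily.

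I expect the main difficulty to lie in two places. First, the algebraic collapse of the bracket into a multiple of $\omega_a r_a-\omega_b r_b$ in \eqref{eqn:prf2} depends essentially on the homogeneity $s_n\equiv s$; with unequal sensing rates the analogous difference does not factor through a single per‑device quantity and the optimal order generally depends on the $s_n$'s as well, which is exactly why the hypothesis is needed and where I would be most careful. Second, \eqref{eqn:prf1} is the optimum of the sensing--offloading subproblem (P1.1), in which $(t^c,\{C_n\})$ is held fixed and the capacity constraints \eqref{eqn:r10g}--\eqref{eqn:r10h} are not imposed; to lift the statement to (P1) itself I would use that the scheduling variables $\{x_{n,i}\}$ never enter the residual computation‑allocation stage and that, by the first step, the common prefactor $(T-t^c)s$ makes the ascending order optimal for every admissible $t^c$ simultaneously, hence at the $t^c$ that is optimal for (P1); the only remaining point is to check that the allocation attached to the ascending order stays feasible for \eqref{eqn:r10g}--\eqref{eqn:r10h}, or can be scaled down to be so, which only shrinks each $\omega_{n_i}R_{n_i,i}$ and relaxes those constraints.
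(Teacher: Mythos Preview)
Your proof is correct and takes a genuinely different route from the paper. The paper proceeds by mathematical induction on $N$: it first checks the two-device case directly from \eqref{eqn:r18}--\eqref{eqn:r21}, then assumes the ascending order is optimal for $N-1$ devices and writes $S(N)=S(N-1)+\text{(a sum of terms proportional to }\omega_{n_N}r_{n_N}-\omega_{n_i}r_{n_i}\text{)}$, concluding that the device placed in the last slot must have the largest weighted rate. Your approach instead reduces everything to the combinatorial quantity $\Phi(\pi)=\sum_{i}\omega_{n_i}\prod_{j\ge i}q_{n_j}$ and proves optimality by an adjacent-transposition (pairwise-interchange) argument, with the key identity \eqref{eqn:prf2} showing that a swap improves $\Phi$ precisely when the earlier device has the larger $\omega_n r_n$. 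This is the classical Smith's-rule style argument from scheduling theory; it is arguably more elementary, yields the exact gain from any swap, and makes transparent exactly where homogeneity $s_n\equiv s$ is used (the factorization $\omega_a q_a(1-q_b)-\omega_b q_b(1-q_a)=\frac{s}{(s+r_a)(s+r_b)}(\omega_a r_a-\omega_b r_b)$ fails otherwise). The paper's induction, by contrast, packages the same algebra into a telescoping comparison against the $(N{-}1)$-device optimum; it is slightly less direct but gives an explicit recursion for $S(N)$. Your remarks about lifting from (P1.1) to (P1) are apt and match the paper's own two-stage decomposition: both arguments really establish optimality of the sequence for every fixed $t^c$, and the paper then optimizes $(t^c,\{C_n\})$ in (P1.2) with the sequence already fixed.
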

\begin{proof}
See Appendix B.
\end{proof}

Theorem \ref{th2} reveals that under the assumption of the identical sensing rates, the optimal offloading sequence follows the ascending order of the weighted transmission rate $\omega_n r_n$, which is independent of all the other variables and thereby greatly simplifies the problem.  This is consistent with the intuition that, to offload more sensed data by device scheduling, the devices with higher transmission efficiency should be scheduled later, so that more data can be sensed while guaranteeing their successful offloading. In fact, it can be shown by simulation results that Scheme \eqref{eqn:r101} is the optimal offloading sequence, even for the general case of different sensing rates. For these reasons, we adopt Scheme \eqref{eqn:r101} as the device offloading sequence to facilitate the algorithm design.

\subsection{Computation Resource Allocation}
Based on the optimal $(t_1^s, \{t_i\})$ for given $(t^c, \{C_n\})$ and offloading sequence \eqref{eqn:r101} derived in above subsections, we develop an algorithm in this subsection to find the optimal $(t^c, \{C_n\})$ for finalizing solving Problem (P1).

Since the offloading sequence $\{x_{n,i}\}$ follows the ascending order of $\omega_n r_n$ as in \eqref{eqn:r101}, we assume that $\omega_{1}r_{1}\leq \omega_{2}r_{2}\leq \cdots \leq \omega_{N-1}r_{N-1} \leq \omega_{N}r_{N}$ without loss of generality. Then, the optimal $(t_1^s, \{t_i\})$ can be expressed in terms of $t^c$ according to \eqref{eqn:r18}-\eqref{eqn:r21}. Plugging these results, we can simplify Problem (P1) as the computation-resource allocation problem below:
\begin{subequations}
	\begin{align}
	{\mathbf{P1.2}:}~
	\max_{t^c \ge 0, C_n \ge 0, \forall n} ~&(T-t^c) \lambda  \\
	{\rm s.t.} \quad\quad  &\frac{T\mu_n}{(C_n+\mu_n)}  \leq t^c,\quad \forall n, \label{eqn:r100g} \\
	&\sum\limits_{n=1}^{N} {C_n} \leq C, \label{eqn:r100h}
	\end{align}
\end{subequations}
where $\lambda=\mu_1+\mu_2+\cdots+\mu_N$ and $\mu_n = \frac{s_{n}\prod_{j=n}^N r_j}{\prod_{j=n}^N(s_j+r_j)}$. Since $\lambda \ge 0$, it is easy to observe that $t^c = \max \limits_n \frac{T\mu_n}{(C_n+\mu_n)}$ should hold for optimality and Problem (P1.2) is equivalent to the following min max problem:
\begin{align}
	\min_{C_n \ge 0, \forall n} &\max_{n} \frac{T\mu_n}{(C_n+\mu_n)}  \nonumber \\
	{\rm s.t.} \quad &  \sum\limits_{n=1}^{N} {C_n} \leq C.
\end{align}

Due to the min-max objective function, the optimal solutions \{$C_n, t^c$\} should satisfy
\begin{align}
&\frac{T\mu_1}{(C_1+\mu_1)} =\cdots=\frac{T\mu_N}{(C_N+\mu_N)}, \label{eqn:r29a}\\
&\sum\limits_{n=1}^{N} {C_n}=C, \label{eqn:r29b}
\end{align}
By calculating \eqref{eqn:r29a} and \eqref{eqn:r29b}, we have
\begin{align}
&t^c=\frac{T \lambda}{\lambda+C} ,\label{eqn:r30}\\
&C_n=\frac{C \mu_n}{\lambda}, \quad \forall n. \label{eqn:r31}
\end{align}

By now, we have equivalently solved the original Problem (P1) in two stages. The pseudo code of this method is given in Algorithm \ref{alg:A1}.  The complexity of Algorithm \ref{alg:A1} is dominated by calculating $\{C_n\}$ and $\{t_i\}$ in Steps 2 and 3, which is of complexity $\mathcal{O}(N)$.

\begin{algorithm}[t]
	\caption{Suboptimal algorithm for Problem (P1)}\label{alg:A1}
	\label{alg:Framwork}
	\begin{algorithmic}[1]
		\STATE  Sort $\omega_n r_n, \forall n$ in an ascending order to generate the offloading order $\{x_{n,i}\}$.
		\STATE Obtain $t^c$, and $C_n$, $\forall n$, according to \eqref{eqn:r30} and \eqref{eqn:r31}.
		\STATE Obtain $t_1^s$, and $t_i$, $\forall i$, according to \eqref{eqn:r18}-\eqref{eqn:r21}.
		\ENSURE $\{x_{n,i}\}$, $t_1^s, \{t_i\}, t^c,$ and $\{C_n\}$.
	\end{algorithmic}
\end{algorithm}

\subsection{Extension to Asynchronous Computing}
In this subsection, we consider the case of asynchronous computing, i.e., for each device, the computing of sensed data starts as soon as it completes offloading. As such, for the device scheduled to offload in slot $i$, its sensed data can be computed at the BS during the sequential slots $i+1, \cdots, N+1$. Similar to the previous sections, we use index $n_i$ to denote the device scheduled at slot $i$ for the given offloading sequence $\{x_{n,i}\}$, i.e., $x_{n, i}=1$ if $n=n_i$, and $x_{n, i} =0$ otherwise. In addition, for notation briefness, we use $\{t_0, t_1, \cdots, t_N, t_{N+1}\}$ to represent the original time-slot division $\{t_1^s, t_1, \cdots, t_N, t^c\}$, where $t_0$ and $t_{N+1}$ replace $t_1^s$ and $t^c$, respectively, and $t_1$ here denotes the previous $t_1-t_1^s$, i.e., the offloading time for the first scheduled user. Then, given an offloading sequence, the sum offloading throughput maximization problem for the asynchronous-computing case can be formulated as
\begin{subequations}
\begin{align}
{\mathbf{P1.3}:} ~\max_{\substack{\{t_i\}\geq 0 \\ \{C_{n_i}\}\geq 0 }} \quad &\sum_{i=1}^N \omega_{n_i} r_{n_i} t_i  \\
\rm{s.t}~\quad~ & \sum_{i=0}^{N+1} t_i \leq T, \label{eqn:r36c1}\\
&r_{n_i} t_i \leq s_{n_i} \sum_{k=0}^{i-1} t_k, \quad \forall i, \label{eqn:r36c2}\\
&r_{n_i} t_i \leq C_{n_i} \sum_{k=i+1}^{N+1} t_k, \quad \forall i, \label{eqn:r36c3}\\
& \sum_{i=1}^N C_{n_i} \leq C. \label{eqn:r36c4}
\end{align}
\end{subequations}
We can observe that the difference of Problem (P1.3) compared to Problem (P1) is located at constraint \eqref{eqn:r36c3}, where the available computing time for the $i$-th scheduled device becomes the remaining time after offloading, rather than the original synchronous time $t^c$. Furthermore, due to the coupling between the computation resource allocation $C_{n_i}$ and time-slot division $\{t_i\}$, constraint \eqref{eqn:r36c3} is non-convex, leading to Problem (P1.3) being more challenging to solve.

In view of the fact that Problem (P1.3) with given $\{C_{n_i}\}$ is a LP problem that can be efficiently solved, we propose a method to solve Problem (P1.3) in two steps. The first step is to obtain an effective solution of $\{C_{n_i}\}$ via solving the convex relaxation problem of (P1.3), and the second step is to determine the corresponding optimal time-slot division $\{t_i\}$ by solving the residual LP problem. Since $\{t_i\}$ in the second step can be easily obtained by the interior-point method, we mainly focus on the first step of finding an effective $\{C_{n_i}\}$ in the sequel.

Specifically, we introduce a new variable $\ell_{i}$ to replace the bilinear term $C_{n_i} \sum_{k=i+1}^{N+1} t_k$ in constraint \eqref{eqn:r36c3}, and apply McCormick relaxation \cite{convex_relax} to the constraint $\ell_{i}= C_{n_i} \sum_{k=i+1}^{N+1} t_k$. Since $0\leq C_{n_i}\leq C$ and $0 \leq \sum_{k=i+1}^{N+1} t_k \leq T$ hold for all $i = 1, \cdots, N$, the McCormick relaxation on $\ell_{i}= C_{n_i} \sum_{k=i+1}^{N+1} t_k$, $\forall i$, consists of four linear constraints:
\begin{subequations}
\begin{align}
\ell_{i} &\geq 0, \label{eqn:cr1}\\
\ell_{i} &\geq C\sum_{k=i+1}^{N+1} t_k + C_{n_i} T -CT, \label{eqn:cr2}\\
\ell_{i} &\leq C\sum_{k=i+1}^{N+1} t_k,\label{eqn:cr3}\\
\ell_{i} &\leq T C_{n_i}.  \label{eqn:cr4}
\end{align}
\end{subequations}
Then, the convex-relaxed Problem (P1.3) can be written as the following form
\begin{subequations}
\begin{align}
\max_{\substack{\{t_i\}\geq 0, \{C_{n_i}\}\geq 0  \\\{\ell_i\}}} \quad &\sum_{i=1}^N \omega_{n_i} r_{n_i} t_i \\
\rm{s.t}~\quad & r_{n_i} t_i \leq \ell_i, \quad i = 1, \cdots, N,\\
& \eqref{eqn:r36c1}, \eqref{eqn:r36c2}, \eqref{eqn:r36c4},  \eqref{eqn:cr1}-\eqref{eqn:cr4}. \nonumber
\end{align}
\end{subequations}
Since Problem (38) is also LP, the optimal $\{ C_{n_i}\}$ to Problem (38) can be efficiently obtained and is also an effective solution to the original Problem (P1.3).

In terms of the device offloading sequence for Problem (P1.3), the optimal sequence can be found by exhaustive search and we use the ascending order of the weighted transmission rate as the sequence in this case. This is motivated by its simplicity for practical implementation and its effectiveness shown in the synchronous-computing case, i.e., the optimal sequence when devices' sensing rates are identical.

Based on the discussions above, we present the whole algorithm procedures of solving Problem (P1.3) in Algorithm \ref{alg:A3}. The complexity of the overall algorithm is mainly determined by solving the LP Problems (38) and (P1.3) in Steps 2 and 3, which is $\mathcal{O}(N^{3.5})$.

\begin{algorithm}[t]
	\caption{Suboptimal algorithm for Problem (P1.3)}\label{alg:A3}
	\begin{algorithmic}[1]
		\STATE \textbf{Initialization:} Set $n_i$ as the index of the device with the $i$-th smallest $\omega_n r_n$, $\forall i$.
		\STATE Solve Problem (38) and obtain its optimal solution $\{C_{n_i}^\prime\}$.
        \STATE Solve Problem (P1.3) with the given $\{C_{n_i}^\prime\}$ and obtain its optimal solution $\{t_i^\prime\}$.
        \ENSURE $\{C_{n_i}^\prime\}$ and $\{t_i^\prime\}$.
	\end{algorithmic}
\end{algorithm}

\section{Offloading Throughput Maximization for NOMA Case}  \label{se4}
In this section, we address Problem (P2) of throughput maximization for NOMA case. First, by observing the problem structure, we have the following theorem that specifies the optimality conditions for Problem (P2).
\begin{theorem}\label{th3}
The optimal $t^c, t^o, t^s$ and $C_n, \forall n$ should satisfy
\begin{align}
&t^s+t^o+t^c=T, \label{eqn:r34a} \\
&\frac{t^s}{t^o}=\beta^* \triangleq \min_{n} \beta_n^*, \label{eqn:r34b}\\
&\frac{t^o r_n}{C_n}=t^c, \quad \forall n, \label{eqn:r34c}\\
&\sum\limits_{n=1}^{N} {C_n}=C, \label{eqn:r34d}
\end{align}
\end{theorem}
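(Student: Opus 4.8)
The plan is to collapse Problem (P2) into a one-dimensional problem in the ratio $\rho\triangleq t^s/t^o$, and then to read off the four optimality conditions \eqref{eqn:r34a}--\eqref{eqn:r34d} as the equality cases of a short chain of inequalities. First I would simplify the power constraints: each $P_n(\cdot)$ in \eqref{eqn:r14} is strictly increasing on $[0,\infty)$ with $P_n(0)=0$, so $0\le P_n(t^s/t^o)\le\overline{P_n}$ is equivalent to $\rho\le\beta_n^*$, where $\beta_n^*$ is the unique root of $P_n(\beta_n^*)=\overline{P_n}$; hence the whole family of power constraints collapses to $\rho\le\beta^*\triangleq\min_n\beta_n^*$. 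Using \eqref{eqn:r13} in the form $r_n=s_n\rho$ and writing $S\triangleq\sum_{n=1}^N s_n$, the objective becomes $\sum_n s_n t^s=S\rho t^o$, so it remains to maximize $\rho t^o$ subject to $\rho\le\beta^*$, the time budget $(1+\rho)t^o+t^c\le T$, $\sum_n C_n\le C$, and $\rho s_n t^o/C_n\le t^c$ for every $n$.

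Next I would bound the product $\rho t^o$ from two sides. Summing the computing-delay constraints gives $\sum_n C_n\ge S\rho t^o/t^c$, which with $\sum_n C_n\le C$ yields $\rho t^o\le Ct^c/S$; the time budget gives $\rho t^o\le\rho(T-t^c)/(1+\rho)$. Hence the objective is at most $S\min\{Ct^c/S,\ \rho(T-t^c)/(1+\rho)\}$, and since the first term increases and the second decreases in $t^c$, the inner maximization over $t^c$ is attained at the crossing point $Ct^c=S\rho(T-t^c)/(1+\rho)$, giving the envelope
\begin{equation}
f(\rho)=\frac{CST\rho}{C+(C+S)\rho},\qquad f'(\rho)=\frac{C^2ST}{\big(C+(C+S)\rho\big)^2}>0.
\end{equation}
Because $f$ is strictly increasing, the objective is bounded above by $f(\beta^*)$.

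Then I would show $f(\beta^*)$ is attained and extract the conditions. Take $\rho=\beta^*$, $t^c=S\beta^*T/\big(C+(C+S)\beta^*\big)$, $t^o=(T-t^c)/(1+\beta^*)$, $t^s=\beta^* t^o$, and $C_n=\rho s_n t^o/t^c$; a routine check shows this point is feasible (the power constraints hold since $\beta^*\le\beta_n^*$; the time budget and $\sum_n C_n=C$ hold by construction; all quantities are positive) and achieves $f(\beta^*)$, so the optimal value equals $f(\beta^*)$. For \emph{any} optimal solution I would then run the chain backwards: equality $S\rho t^o=Ct^c$ forces $\sum_n C_n=C$ and $C_n=\rho s_n t^o/t^c$, i.e. $t^o r_n/C_n=t^c$ for all $n$, which are \eqref{eqn:r34c}--\eqref{eqn:r34d}; equality in the time bound forces $(1+\rho)t^o+t^c=T$, i.e. \eqref{eqn:r34a}; and $f(\rho)=f(\beta^*)$ with the strict monotonicity of $f$ forces $\rho=\beta^*$, i.e. \eqref{eqn:r34b}. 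The degenerate cases $\rho=0$ and $t^o=0$, which both make the throughput zero, are excluded since $f(\beta^*)>0$.

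The step I expect to be the main obstacle is seeing why a naive perturbation argument cannot work and what replaces it. Because $t^o r_n=s_n t^s$, any attempt to raise the objective by increasing $t^s$ simultaneously tightens every computing-delay constraint, so one cannot simply push $t^s$ up while freezing the other variables; the genuine idea is to bound the bilinear quantity $\rho t^o$ from both sides and to observe that, after eliminating $t^c$ by the min-of-two-terms argument, the remaining envelope $f(\rho)$ is still monotone in $\rho$. Once that is in place, each of \eqref{eqn:r34a}--\eqref{eqn:r34d} follows mechanically by identifying which inequality of the chain is forced to equality at an optimum.
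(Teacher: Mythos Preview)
Your proof is correct and takes a genuinely different route from the paper's. The paper argues the four conditions sequentially by perturbation and contradiction: if the time budget were slack, scale $(t^s,t^o,t^c)$ up; if the computing or capacity constraints were slack, reallocate so as to shrink $t^c$; if all power constraints were slack, trade $t^o$ for $t^s$ to raise the objective. Each step is stated tersely and the coupling between constraints (exactly the issue you raise in your last paragraph) is not made fully explicit.

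You instead collapse Problem~(P2) to a single variable $\rho=t^s/t^o$, bound the objective by the envelope $f(\rho)=CST\rho/(C+(C+S)\rho)$, use the strict monotonicity of $f$ to push the bound to $f(\beta^*)$, exhibit a feasible point attaining it, and then extract \eqref{eqn:r34a}--\eqref{eqn:r34d} by forcing every inequality in the chain to equality. This buys you two things the paper's argument does not: it is fully rigorous without needing to justify multi-variable perturbations, and it delivers the closed-form optimal value $f(\beta^*)$ and the explicit optimizer in the same stroke---quantities the paper derives only afterwards in \eqref{eqn:r35a}--\eqref{eqn:r35d}. The paper's approach is shorter and more intuitive; yours is self-contained and produces the full solution in one pass.
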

where $\beta_n^*$ is the root of the following equation that can be obtained by the bisection search method:
\begin{align}
H_n(\beta)=\overline{P_{n}}-\frac{N_0}{h_n}(2^{s_n \beta}-1)2^{\sum\limits_{m=n+1}^N s_m \beta} = 0. \label{eqn:r100}
\end{align}
\begin{proof}
First, $t^s+t^o+t^c = T$ should hold in \eqref{eqn:r16b} for throughput maximization, otherwise throughput can further increase by simultaneously scaling $(t^s, t^o, t^c)$ without violating the constraints, which leads to the first condition \eqref{eqn:r34a}. Constraint \eqref{eqn:r16d} for all $n$, and constraint \eqref{eqn:r16e} also should be active, such that the computation resource is fully utilized to achieve the maximum throughput, which also can be proved by contradiction similarly. Therefore, we have conditions \eqref{eqn:r34c} and \eqref{eqn:r34d} for the optimal computation resource allocation.

On the other hand, we can observe that the objective function of Problem (P2) is monotonically increasing with $t^s$, while the device transmit power $P_n$ in \eqref{eqn:r16c} is monotonically increasing with $\frac{t^s}{t^o}$. Thus, for optimality, one of the constraints \eqref{eqn:r16c} should be active; otherwise we can   further decrease $t^o$ and at the same time increase $t^s$ as a way to improve the throughout. Let $\beta_n^*$ denote the root of $P_n(\frac{t^s}{t^o}) = \overline{P_n}$, $\forall n$. Due to the increasing monotonicity, one active constraint among \eqref{eqn:r16c} implies that the optimal $\frac{t^s}{t^o}$ should be the minimum $\beta_n^*$, which completes the proof.
\end{proof}
Theorem \ref{th3} reveals several resource allocation policies for throughput maximization in the NOMA case. First, the whole radio-and-computation resources should be fully used, specified by \eqref{eqn:r34a} and \eqref{eqn:r34d} respectively. Second, the optimal computation resource allocated to device $n$ is the minimum amount required for computing the devices' offloaded data within $t^c$. Last, under the power control strategy \eqref{eqn:r14}, the optimal tradeoff between the sensing time and offloading time is determined by the user with the minimum sensed-then-offload efficiency $\beta_n^*$, where $\beta_n^*$ depends on the user parameters quantifying sensing rate, channel, transmit power budget, and SIC order, respectively, as specified by \eqref{eqn:r100}.

Based on the results from Theorem \ref{th3}, we can derive the optimal solution to Problem (P2) by jointly solving \eqref{eqn:r34a}-\eqref{eqn:r34d}, which is given by
\begin{align}
t^s &=\left. T \middle/\left[1+\frac{1}{\beta^*} +\frac{\sum_{n=1}^{N} r_n^*}{C\beta^*}\right] \right.,  \label{eqn:r35a}\\
t^o &=t^s \cdot\frac{1}{\beta^*}, \label{eqn:r35b} \\
t^c &=t^s \cdot\frac{\sum_{n=1}^{N} r_n^*}{C\beta^*}, \label{eqn:r35c} \\
C_n &=C  \cdot\frac{r_n^*}{\sum_{n=1}^{N} r_n^*}, \quad \forall n, \label{eqn:r35d}
\end{align}
where $r_n^*$ denotes the device transmission rate in \eqref{eqn:r12} with given $P_n = P_n (\beta^*)$, $\forall n$.

We can observe that in NOMA case,  the optimal time proportion of three phases follows $t^s : t^o :t^c = 1 : \frac{1}{\beta^*} : \frac{\sum_{n=1}^{N} r_n^*}{C\beta^*}$, and the optimal computation resource allocated to device $n$ is proportional to its transmission rate.

\begin{algorithm}[t]
	\caption{Optimal algorithm for Problem (P2)}\label{alg:A2}
	\begin{algorithmic}[1]
		\FOR {each device $n$}
        \STATE \textbf{initialize} $\beta^{LB}=0$ and $\beta^{UB}>0$ with $H_n(\beta^{UB})<0$.
        \REPEAT
        \STATE Set $\beta=\frac{1}{2}\left(\beta^{LB}+\beta^{UB}\right)$.
		\IF {$H_n(\beta)>0$}
		\STATE Set $\beta^{LB}=\beta$.
		\ELSE
		\STATE Set $\beta^{UB}=\beta$.
		\ENDIF
		\UNTIL $\beta^{UB}-\beta^{LB}<\epsilon$, where $\epsilon>0$ is a very small constant for controlling accuracy.
        \STATE Set $\beta_n^* = \beta$.
		\ENDFOR
		\STATE Set $\beta^*=\min\limits_n \beta_n^*$, and obtain $t^s, t^o, t^c$, and $C_n, \forall n$, according to \eqref{eqn:r35a}-\eqref{eqn:r35b}, respectively.
        \ENSURE $t^s, t^o, t^c$, and $\{C_{n}\}$.
	\end{algorithmic}
\end{algorithm}

Finally, we summarize the whole algorithm for solving Problem (P2) in Algorithm \ref{alg:A2}. The algorithm complexity is mainly determined by the bisection search of $\{\beta_n^*\}$ in Steps 2)-12), which is $\mathcal{O} (N \log_2 \frac{1}{\epsilon})$. Here $\epsilon$ is a very small constant for controlling accuracy.

\subsection{Special Case}
In this subsection, we assume that all devices have the same sensing rates, i.e., $s_n = s_0$, $\forall n$, and consider the case of flexible SIC decoding order, known as the time-sharing strategy \cite{PD2016}. We will show that compared to the fixed SIC order case, adopting the time-sharing strategy in our NOMA scheme can achieve a higher sum offloading throughput since it can increase the minimum transmission rate among devices, i.e., improve user fairness.

The basic idea behind time sharing is to further divide the offloading time into multiple time fractions, each assigned with distinct SIC order for the BS to decode received signals. By changing the time portions among fractions, the time-sharing strategy can adjust the individual transmission rate of each user without affecting the sum rate. Let $\mathcal{M} = \{1, \cdots, m, \cdots, M\}$ denote the set of SIC orders used in the fractions, and $\pi_m (n)$ denote the order of device $n$ to be decoded in fraction $m$. For example, when $N=3$, $\mathcal{M}$ is the subset of $3!$ possible SIC orders; and if fraction $1$ follows $3 \rightarrow 1 \rightarrow 2$ to decode the devices' signals, then device $1$'s signal is at the $\pi_1(1) = 2$ order to decode. Based on these, we can express the achievable transmission rate of device $n$ in the time fraction $m$ as
\begin{align}\label{eqn:Rnm}
r_{n,m} = \log_2\left(1+ \frac{P_n h_n}{I_{n,m} + N_0}\right),
\end{align}
where $I_{n,m}\triangleq \sum_{\forall k |\pi_m(n)<\pi_m(k)\leq N }P_k h_k$ denotes the sum interference from the un-decoded devices. Similar to \cite{PD2016}, we consider that the configurations in $\mathcal{M}$ are pre-determined, and $P_n = \overline{P_n}$, $\forall n$, for sum transmission rate maximization. Hence, each $r_{n,m}$ is deterministic through \eqref{eqn:Rnm}.

Let $\tau_{m}$, with $\sum_m\tau_{m} = 1$, denote the time portion of fraction $m$. Under the homogeneous assumption of device sensing rates, the throughput maximization Problem (P2) with time-sharing can be formulated as
\begin{subequations}
\begin{align}
{\mathbf{P2.1}:}~
\max_{\substack{t^s, t^o, t^c, \\ \{\tau_m\} \{C_n\}}}\quad  & Ns_0 t^s \\
{\rm s.t.} \quad~ ~& t^s+t^o+t^c\leq T, \label{eqn:r43a} \\
& s_0 t^s \leq \sum_{m=1}^{M} \tau_m t^o r_{n,m}, \quad \forall n,\label{eqn:r43b} \\
&\!\sum_{m=1}^{M} \tau_m t^o r_{n,m} \leq C_nt^c,  \quad \forall n,\label{eqn:r43c} \\
&\!\sum_{m=1}^{M}\tau_m =1, \quad \sum_{n=1}^{N}C_n\leq C.\label{eqn:r43d}
\end{align}
\end{subequations}
Note that Problem (P2.1) can be transformed into a standard LP problem, via variable substitution on bilinear terms $\tau_m t^0$ and $C_nt^c$ and re-expression on the constraints \eqref{eqn:r43b}-\eqref{eqn:r43d} accordingly. Thus, the optimal solution to Problem (P2.1) can be efficiently obtained by the LP solver.

In the sequel, we focus on theoretical analysis of the optimal system throughput of Problem (P2.1) compared to that of Problem (P2), which is specified by the following theorem.

\begin{theorem}\label{th4}
Let $R_{\text{sharing}}$ and $R_{\text{fixed}}$ denote the optimal system throughput of Problem (P2.1) and Problem (P2), respectively. Assume that $s_n = s_0$, $\forall n$, and that $\mathcal{M}$ includes the SIC order used in Problem (P2). Then, $R_{\text{sharing}}\geq R_{\text{fixed}}$ holds for any given $\mathcal{M}$.
\end{theorem}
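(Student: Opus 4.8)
The plan is to reduce both problems to a one-parameter optimization and then show that the reduced optimum of Problem (P2.1) dominates that of Problem (P2). Following the proof of Theorem~\ref{th3}, at an optimum of Problem (P2.1) the time budget \eqref{eqn:r43a} is tight, the computation constraints \eqref{eqn:r43c} and $\sum_n C_n\le C$ are tight, and among \eqref{eqn:r43b} the binding one is that of the device with the smallest time-averaged rate $\bar r_n\triangleq\sum_{m}\tau_m r_{n,m}$. Since the successive-cancellation sum rate does not depend on the decoding order, $\sum_n r_{n,m}$ equals a constant $\Sigma$ for every $m$, hence $\sum_n\bar r_n=\Sigma$ for every admissible $\{\tau_m\}$; solving \eqref{eqn:r43a}--\eqref{eqn:r43d} then gives $R_{\text{sharing}}=Ns_0T\gamma^*/(\gamma^*+1+\Sigma/C)$ with $\gamma^*\triangleq\max_{\{\tau_m\}}\min_n\bar r_n/s_0$, a value increasing in $\gamma^*$ and decreasing in $\Sigma$. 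In the same way, under $s_n=s_0$ the optimality condition \eqref{eqn:r13} forces $r_n=s_0t^s/t^o=s_0\beta^*$ for all $n$ in Problem (P2), so $\sum_n r_n^*=Ns_0\beta^*$ and $R_{\text{fixed}}=Ns_0T\beta^*/(\beta^*+1+Ns_0\beta^*/C)$. Thus the theorem reduces to the scalar inequality $\gamma^*(C+Ns_0\beta^*)\ge\beta^*(C+\Sigma)$.

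To attack this I would use the polymatroid structure of the Gaussian multiple-access rate region. The rate vector $(s_0\beta^*,\dots,s_0\beta^*)$ is achieved in Problem (P2) by successive decoding with the powers $P_n(\beta^*)$ of \eqref{eqn:r14}, which satisfy $P_n(\beta^*)\le\overline{P_n}$; by monotonicity of the multiple-access rate region in the transmit powers, this vector also lies in the rate region at the peak powers $\{\overline{P_n}\}$. A polymatroid contains, above any of its points, a point of its dominant face, and the dominant face is exactly the set of rate vectors generated by time-sharing the per-SIC-order corner points. Hence, provided $\mathcal{M}$ contains the order $m_0$ used in Problem (P2) together with the orders needed to synthesize that dominating point (taking $\mathcal{M}$ to be all $N!$ orders certainly suffices), there are weights $\{\tau_m\}$ with $\bar r_n\ge s_0\beta^*$ for every $n$, and therefore $\gamma^*\ge\beta^*$. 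When the peak-power region additionally admits the equal-rate point one even has $\gamma^*=\Sigma/(Ns_0)$, and the scalar inequality then becomes $\Sigma(C+Ns_0\beta^*)\ge Ns_0\beta^*(C+\Sigma)$, i.e.,\ $\Sigma\ge Ns_0\beta^*$, which is true because peak-power transmission can only raise the sum rate; this is the natural way to finish.

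The step I expect to be the real obstacle is closing the gap between ``$\gamma^*\ge\beta^*$'' and the sharper ``$\gamma^*(C+Ns_0\beta^*)\ge\beta^*(C+\Sigma)$''. Because Problem (P2.1) fixes $P_n=\overline{P_n}$, the amount of data to be computed is inflated from $Ns_0\beta^*$ in Problem (P2) to $\Sigma\ge Ns_0\beta^*$, lengthening the computing phase, so one must verify that the shorter offloading phase afforded by the larger minimum rate more than compensates. This forces a careful characterization of the max--min-fair point of the peak-power polymatroid: it coincides with the equal-rate point $\Sigma/N$ for ``balanced'' channel/power profiles (where the compensation is exact, as above), while for strongly asymmetric profiles one needs a water-filling description of the max--min rate over the polymatroid; it is also here that one must be explicit about which decoding orders $\mathcal{M}$ has to contain for the inequality to hold.
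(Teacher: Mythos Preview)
Your closed-form reductions of both problems are correct and agree with the paper: $R_{\text{fixed}}=Ns_0T\beta^*/\big(\beta^*+1+Ns_0\beta^*/C\big)$ and, for a fixed time-sharing vector, $R_{\text{sharing}}=Ns_0T\gamma/\big(\gamma+1+\Sigma/C\big)$ with $\gamma=\min_n\bar r_n/s_0$ and $\Sigma=\sum_n\bar r_n$ order-independent by Lemma~\ref{lem1}. Where you diverge from the paper is in how you compare the two.

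\textbf{The paper's shortcut.} Instead of maximizing over $\{\tau_m\}$ and then wrestling with the scalar inequality $\gamma^*(C+Ns_0\beta^*)\ge\beta^*(C+\Sigma)$, the paper introduces an intermediate quantity $R_{\text{sharing}}^\prime$: the optimum of Problem~(P2.1) \emph{with the additional restriction that every constraint \eqref{eqn:r43b} is active}. Adding constraints can only lower a maximum, so $R_{\text{sharing}}\ge R_{\text{sharing}}^\prime$ is immediate. But once all \eqref{eqn:r43b} are tight, summing them gives $Ns_0t^s=\Sigma t^o$, and together with the tight \eqref{eqn:r43a}, \eqref{eqn:r43c}, \eqref{eqn:r43d} one obtains a closed form for $R_{\text{sharing}}^\prime$ that depends on $\Sigma$ alone (not on $\gamma^*$). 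This is precisely your ``equal-rate'' special case, and the remaining comparison $R_{\text{sharing}}^\prime\ge R_{\text{fixed}}$ collapses to $\Sigma\ge Ns_0\beta^*$, i.e., peak-power sum rate versus power-controlled sum rate, which is immediate from $\overline{P_n}\ge P_n(\beta^*)$ and Lemma~\ref{lem1}.

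\textbf{What this buys.} The trick sidesteps your obstacle completely: there is no need to lower-bound $\gamma^*$ by $\beta^*$, no polymatroid domination argument, and no water-filling analysis of the max--min-fair point. Your route is more structural and would yield finer information about $R_{\text{sharing}}$ itself, but for the inequality in the theorem the paper's sandwich $R_{\text{sharing}}\ge R_{\text{sharing}}^\prime\ge R_{\text{fixed}}$ is both shorter and avoids the case split you anticipated.

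\textbf{A caveat you already sensed.} The step $R_{\text{sharing}}\ge R_{\text{sharing}}^\prime$ is vacuous unless the all-active system is feasible, which requires the equal-rate vector $(\Sigma/N,\dots,\Sigma/N)$ to lie in the convex hull of the corner points indexed by $\mathcal{M}$. The paper derives the closed form \eqref{eqn:48} as though this always holds and does not verify it for arbitrary $\mathcal{M}$; your remark that one must be explicit about which decoding orders $\mathcal{M}$ needs to contain is therefore well taken. When $\mathcal{M}$ is rich enough for the equal-rate point (e.g., all $N!$ orders, or the set produced by the greedy scheme the paper cites), both your argument and the paper's go through; for a sparse $\mathcal{M}$ neither proof closes, and in fact the inequality can fail.
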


\begin{proof}
Let $R_{\text{sharing}}^\prime$ denote the optimal system throughput of Problem (P2.1) conditioned on all the constraint \eqref{eqn:r43b} being active. Clearly, by definition, $R_{\text{sharing}}\geq R_{\text{sharing}}^\prime$ for any given $\mathcal{M}$.
In the following, we will first express $R_{\text{fixed}}$ and $R_{\text{sharing}}^\prime$ and then show $R_{\text{sharing}}^\prime \geq R_{\text{fixed}}$ for any given $\mathcal{M}$. Based on this, $R_{\text{sharing}}\geq R_{\text{fixed}}$ naturally holds for any given $\mathcal{M}$.

To express $R_{\text{fixed}}$ and $R_{\text{sharing}}^\prime$, we need the following preliminary lemma \cite{sumrate_NOMA}:
\begin{lemma}\label{lem1}
Given the device's transmit power $P_n$, $\forall n$, the sum of the devices' transmission rates can be expressed as
\begin{align}
\sum_{n=1}^{N} r_n = \log_2 \left(1+ \frac{\sum_{n=1}^N P_n h_n}{N_0}\right),
\end{align}
which is a function of $\sum_{n=1}^N P_n h_n$ and is independent of the SIC decoding order.
\end{lemma}

First, we express $R_{\text{fixed}}$ in the case of $s_n = s_0$, $\forall n$. Combining the result of Lemma \ref{lem1} and  $s_0 t^s = r_n t^o$, $\forall n$, from \eqref{eqn:r13}, we have
\begin{align}\label{eqn:45}
\beta^* = \frac{t^s}{t^o} =  \frac{\sum_{n=1}^N r_n^*}{Ns_0} = \frac{\log_2 \left(1+ \frac{\sum_{n=1}^N P_n(\beta^*) h_n}{N_0}\right)}{Ns_0}.
\end{align}
By substituting \eqref{eqn:45} into \eqref{eqn:r35a}, $R_{\text{fixed}}$ is  derived as
\begin{align}\label{eqn:46}
&R_{\text{fixed}}=  Ns_0 t^s \nonumber \\
&=\left. T\!\middle/\!\bigg[1+\frac{Ns_0}{\log_2\big(1+ \frac{\sum_{n=1}^N P_n(\beta^*) h_n}{N_0}\big)} +\frac{Ns_0}{C}\bigg]\right..
\end{align}

Next we turn to express $R_{\text{sharing}}^\prime$. Since all constraints \eqref{eqn:r43b} are active for this case and $\sum_{m=1}^M \tau_m =1$, by Lemma \ref{lem1}, the optimal sensing-and-offloading time radio satisfies
\begin{align}\label{eqn:47}
\frac{t^s}{t^o} = \frac{\sum_{n=1}^N\!\sum_{m=1}^M\!\tau_m r_{n,m}}{N s_0} = \frac{\log_2 \left(1+ \frac{\sum_{n=1}^N \overline{P_n} h_n}{N_0}\right)}{Ns_0}.
\end{align}

Similar to Theorem \ref{th3}, it can be checked that constraints \eqref{eqn:r43a}, \eqref{eqn:r43c}, and \eqref{eqn:r43d} of Problem (P2.1) should be active for optimality. Combining these active constraints and condition \eqref{eqn:47}, we also can obtain the corresponding solution of $t^s, t^o, t^c, \{C_n\}$, and then derive $R_{\text{sharing}}^\prime$ as
\begin{align}\label{eqn:48}
R_{\text{sharing}}^\prime =  \left. T \middle/\bigg[1+\frac{Ns_0}{\log_2 \big(1+ \frac{\sum_{n=1}^N \overline{P_n} h_n}{N_0}\big)} +\frac{Ns_0}{C}\bigg]\right..
\end{align}
Since $\overline{P_n}\geq P_n(\beta^*)$, $\forall n$, we can easily observe from \eqref{eqn:46} and \eqref{eqn:47} that $R_{\text{sharing}}^\prime \geq R_{\text{fixed}}$ holds for any given $\mathcal{M}$, which completes the proof.
\end{proof}

The intuition behind Theorem \ref{th4} is that in the NOMA case with $s_n = s_0$, $\forall n$,  the system throughout maximization is equivalent to maximizing the sensing time $t^s$, and the maximum $t^s$ is further restricted by the minimum transmission rate among devices [see \eqref{eqn:r13} and \eqref{eqn:r43b}]. Compared to the fixed SIC order scheme, the time-sharing strategy can flexibly adjust the time portions among multiple SIC orders to achieve a larger minimum device transmission rate. Hence, adopting time-sharing strategy can achieve higher system throughput.

It is worth noting that the time-sharing strategy performance also depends on the set of SIC orders $\mathcal{M}$ used in the fractions. The optimal $\mathcal{M}$ is with $M = N!$, i.e., considering all the possible SIC orders among devices. However, it incurs prohibitively high complexity when the number of devices is large. For the practical implementation of the time-sharing, we can adopt the greedy algorithm proposed in \cite[Section IV]{PD2016} to obtain an efficient set $\mathcal{M}$. Its main idea is to repeatly insert the candidate orders into set $\mathcal{M}$ until the minimum transmission rate does not increase. Due to the lack of space here we omit to explain its detailed implementation.

\section{Simulation Results}\label{se5}
We present the simulation results in this section to evaluate the performance of the proposed schemes. The simulation parameters are set as follows unless specified otherwise. We set the time frame $T=1$ s, the device weights $\omega_n=1, \forall n$, the system bandwidth $B=1$ MHz, and the noise power $N_0=10^{-9}$ W. The channel power gains between the BS and all devices are modeled as $h_{n}=cd_{n}^{-\gamma}\rho_{n}^{2}, \forall n$, where $d_{n}$ denotes the distance between the BS and the $n$-th device which is distributed uniformly in the range $(0, 50]$ m. $c=10^{-3}$, $\gamma=3.5$ and $\rho_n$ represent the average signal power attenuation at a reference distance, the pathloss exponent, and a Gaussian random variable satisfying independent and identical distribution (i.i.d), respectively. For each device, the maximum transmit power is set as $P=1$ W while the sensing rate follows a uniform distribution with $s_n\in[1,10]\times 10^5$ bits/s. Last, the computational capacity of MEC units $C$ is set to be $1\times 10^7$ offloaded bits per second. The simulation results are obtained by averaging over 500 realizations.

\subsection{Throughput Maximization Based on TDMA}
For performance comparison, we also evaluate the performance of the following benchmark schemes. Note that all these benchmark schemes are used to find the offloading sequence $\{x_{n,i}\}$, in view of the fact that Problem (P1) with given $\{x_{n,i}\}$ is convex and can be efficiently solved.

\begin{enumerate}[\quad A)]
	\item \textbf{Optimal via exhaustive search:} Enumerate all the $N!$ possible offloading sequences to find the optimal one;
	\item \textbf{Random offloading order:} All the devices offload in an order that is randomly chosen from the $N!$ possible orders;
	\item \textbf{Ascending order of sensing rates:} All the devices offload in the ascending order of sensing rates (i.e., $s_n$);
	\item \textbf{Descending order of transmission rates:} All the devices offload in the descending order of transmission rates (i.e., $r_n$);
\end{enumerate}

In Fig. \ref{fig2}, we compare the sum throughput performance of different schemes when the number of devices $N$ varies from 4 to 24. First, we can observe that our proposed Algorithm \ref{alg:A1} that uses the ascending order of transmission rate (i.e., $r_n$) as the offloading sequence, can achieve the optimal performance same as the exhaustive search, indicating the effectiveness of our proposed sequence scheme. It can also be seen that the proposed Algorithm \ref{alg:A1} outperforms the B, C and D benchmark schemes as $N$ increases and obtains about $14\%$ throughput improvements over the B, C schemes and $30\%$ over the D scheme, which implies that the offloading order would have a significant impact on the system throughput. D scheme achieves the worst performance due to its entirely opposite order compared to our proposed sequence scheme. $B$ and $C$ schemes achieve similar performance because from the perspective of transmission rates, their offloading orders are both random. This also validates our conclusion that the optimal order is independent of sensing rates (i.e., $s_n$). On the other hand, we can see that compared to the synchronous case, the proposed Algorithm \ref{alg:A3} in the asynchronous computing case can further improve the system throughput especially when $N$ is large. This demonstrates the superiority of parallelism between computing and TDMA-based offloading in the large-scale system.

\begin{figure}[t]
	\begin{centering}
		\includegraphics[scale=0.5]{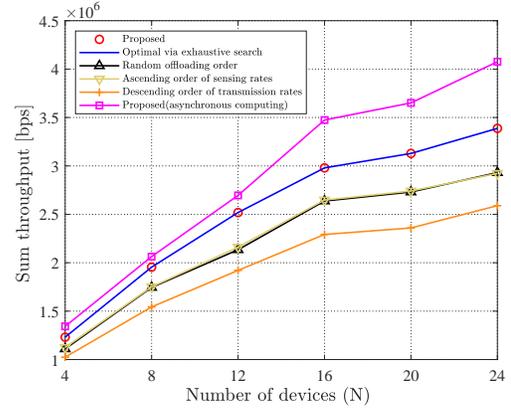}
		\vspace{-0.05cm}
		\caption{Sum system throughput  v.s. the number of devices $N$. }\label{fig2}
	\end{centering}
	\vspace{-0.2cm}
\end{figure}
\begin{figure}[t]
	\begin{centering}
		\includegraphics[scale=0.5]{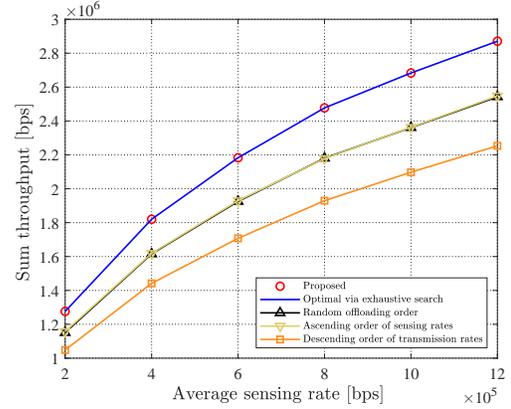}
		\vspace{-0.05cm}
		\caption{Sum system throughput  v.s. the average sensing rate }\label{sensingrate}
	\end{centering}
	\vspace{-0.2cm}
\end{figure}
In Fig. \ref{sensingrate}, we study the impact of devices' sensing rates  on the sum throughput performance, where $N = 8$, $s_n \in [1, s_\text{max}]\times 10^5$ bits/s, $\forall n$,  and the different average sensing rates are obtained by varying $s_\text{max}$ from $3$ to $23$. We can observe that no matter what the range and variance of devices' sensing rates are, the proposed Algorithm \ref{alg:A1} that uses the ascending order of $r_n$ as the device-scheduling sequence, always achieves the optimal performance same as the exhaustive search. This demonstrates the ascending order of $\omega_n r_n$ is the optimal offloading-sequence scheme, even for the general case of different sensing rates.

\subsection{Performance Comparison of Multiple Access Techniques}
A comparison of sum throughput obtained by TDMA, NOMA and FDMA will be presented in the following. Specifically, the sum throughput of TDMA and TDMA with asynchronous computing are obtained by Algorithm \ref{alg:A1} and Algorithm \ref{alg:A3}, respectively; and the sum throughput of NOMA is obtained by Algorithm \ref{alg:A2}. While for FDMA, we denote $\alpha_n$ as the proportion of bandwidth allocated to device $n$ and the sum throughput of FDMA is obtained by solving the problem:
\begin{subequations}
\begin{align}
\max_{\substack{\{\ell_n\}, t^s, t^o, t^c\\ \{B_n\}, \{C_n\}}} ~~ &  \sum_{n=1}^{N} \ell_n\\
\rm{s.t}~~\quad & t^s + t^o +t^c \leq  T, \\
&\ell_n\leq \min\bigg\{t^o \alpha_n B \log\big(1\!+\!\tfrac{P_n h_n}{\alpha_n B N_0}\big),\bigg.\nonumber \\
&\qquad \qquad \qquad\qquad \quad  \bigg.t^s s_n, ~t^c C_n \bigg\}, ~\forall n, \label{eqn:46b} \\
&\sum_{n=1}^N \alpha_n \leq 1, \quad \sum_{n=1}^N C_n \leq C.
\end{align}
\end{subequations}
where constraint \eqref{eqn:46b} means that for each device, its offloading throughput is restricted by the minimum amount of data among the phases of sensing, offloading, and computing. It is well-known that $t^o \alpha_n B \log\left(1+\tfrac{P_n h_n}{\alpha_n B N_0}\right)$ in \eqref{eqn:46b} can be converted into the concave form by variable substitution. Thus, the above problem can be transformed to a convex problem and efficiently solved by existing solvers, e.g., CVX \cite{cvx}.

\begin{figure}[t]
	\begin{centering}
		\includegraphics[scale=0.5]{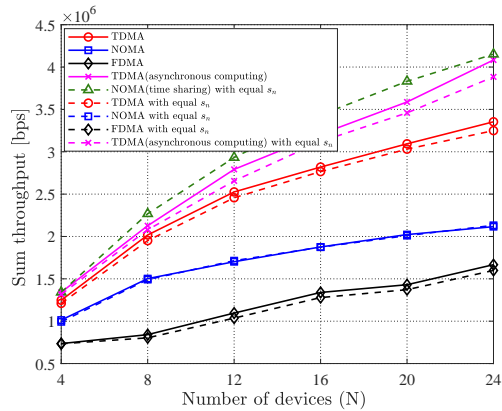}
		\vspace{-0.05cm}
		\caption{Sum system throughput v.s. the number of devices $N$. }\label{fig3}
	\end{centering}
	\vspace{-0.2cm}
\end{figure}

Fig. \ref{fig3} shows the sum throughput obtained by TDMA, TDMA with asynchronous computing, NOMA with fixed decoding order, NOMA with time sharing, and FDMA, versus different number of devices. Since NOMA with time sharing works for the situation that all devices have the same sensing rate, we also consider all the schemes in the case with equal $s_n=5\times10^5$ bit/s for comparison.
First, we can see that the sum throughput of different multiple access techniques basically follow TDMA $>$ NOMA $>$ FDMA. This is because NOMA and FDMA perform the sequential operations in sensing and offloading, while TDMA can take advantage of sensing-and-offloading parallelism, i.e., the time allocated for sensed-data offloading of one device is also utilized for sensing data at the un-scheduled devices, to make better use of time. NOMA outperforms FDMA because the multiplexing in power domain can achieve a higher sum transmission rate in the offloading phase compared to FDMA. Second, it can be observed that NOMA with time sharing can achieve a significant throughput improvement compared to NOMA with fixed SIC order, which demonstrates the benefit of the flexible SIC order scheme used in time sharing.

\begin{figure}[t]
	\begin{centering}
		\includegraphics[scale=0.5]{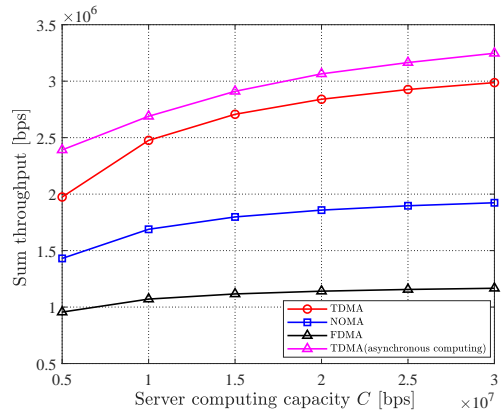}
		\vspace{-0.05cm}
		\caption{Sum system throughput  v.s. the server computing capacity $C$. }\label{fig4}
	\end{centering}
	\vspace{-0.2cm}
\end{figure}

Fig. \ref{fig4} illustrates the relationship between the sum throughput and the server computing capacity $C$, where the number of devices $N$ is set to be $12$. As expected, the sum throughput increases with computing capacity in all considered schemes. This is because that increasing computing capacity can reduce the computing time $t^c$ and prolong the time for data sensing and offloading, thus contributing to higher sum throughput.

\begin{figure}[t]
	\begin{centering}
		\includegraphics[scale=0.5]{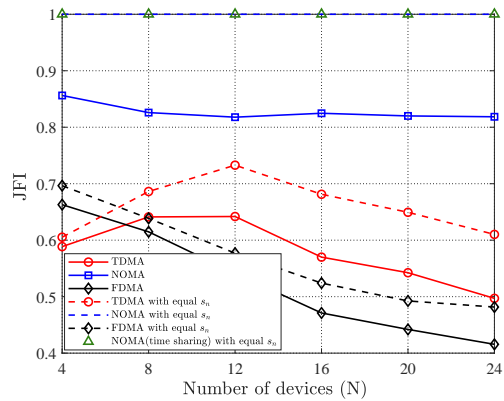}
		\vspace{-0.1cm}
		\caption{ Jain's fairness index comparison. }\label{fig5}
	\end{centering}
	\vspace{-0.1cm}
\end{figure}

Fig. \ref{fig5} illustrates the user fairness of different multiple access schemes versus the number of devices $N$. Here, we adopt Jain's fairness index (JFI) as a metric to quantify user fairness, which is defined as
\begin{align}
J=\frac{(\sum_{n=1}^N R_n)^2}{N\sum_{n=1}^N R_n^2}. \label{eqn:r45}
\end{align}
One can observe that when devices have varied sensing rates, the JFIs of TDMA,  NOMA, and FDMA, basically follow NOMA $>$ TDMA $>$ FDMA. This is because NOMA scheme follows the idea of increasing the minimum device offloading throughput to improve the sum throughput, while TDMA and FDMA prefer to increase the offloading throughput of the device with high sense-and-offload efficiency (via allocating more resources) as a way to maximum the sum throughput.
Another observation is that when $N$ increases, the JFI of TDMA scheme decreases; while for TDMA with equal $s_n$, the JFI first increases when $N\leq 12$ and begins to decrease afterwards. This could be explained by the fact that when $N\leq 12$, the devices achieve similar throughput due to the same sensing rate. However, when $N$ increases, there is greater diversity of devices' channel conditions which make it difficult to maintain the fairness. The observations from Fig. \ref{fig3} and Fig. \ref{fig5} reveal that when the devices have varied sensing rates, TDMA scheme can achieve the maximum sum throughput while NOMA scheme can promote the fairness among devices but has relative small sum throughput. When all devices have the same sensing rate, NOMA with time sharing can guarantee both the maximum sum throughput and the maximum fairness.

\section{Conclusions}\label{se6}
In this paper, we have investigated throughput maximization in a multiuser MEC system based on TDMA and NOMA offloading with a sense-then-offload protocol. The formulated problems require a joint control of data sensing, offloading and edge-server computing. A set of low-complexity algorithms are proposed for both TDMA and NOMA cases. They are either optimal or close-to-optimal by leveraging the insights of problem structure. Moreover, asynchronous computing for TDMA-based offloading and time-sharing strategy for NOMA-based offloading are also studied. The simulation results show that when the devices have different sensing rates, TDMA scheme is better the NOMA one for sum throughput maximization since the NOMA scheme sacrifices the throughput for fairness promotion. When all devices have the same sensing rate, NOMA with time-sharing strategy will serve as the optimal option since it can guarantee both the maximum sum throughput and the maximum fairness.

\begin{appendices}
\section{Proof of \eqref{eqn:r14}}\label{AP1}
According to \eqref{eqn:r12}, we can express $P_n$ as
\begin{align}
P_n=\frac{(2^{r_{n}}-1)(\sum_{k=n+1}^N P_k h_k+N_0)}{h_n}.\label{eqn:r46}
\end{align}
Rewriting $r_n$ in \eqref{eqn:r12} as $r_{n}=\log_{2}\left(\frac{\sum_{k=n}^{N}P_{k}h_{k}+N_0}{\sum_{k=n+1}^{N}P_k h_k+N_0}\right)$ yields
\begin{align}
\sum\limits_{k=n+1}^{N}r_{k}=\log_{2}\bigg(\frac{\sum_{k=n+1}^{N}P_{k}h_{k}+N_0}{N_0}\bigg).\label{eqn:r47}
\end{align}
It can be derived from \eqref{eqn:r47} that
\begin{align}
\sum\limits_{k=n+1}^N P_k h_k+N_0=N_0 2^{\sum\limits_{k=n+1}^{N}r_{k}} \label{eqn:r48}
\end{align}
Substituting \eqref{eqn:r48} and \eqref{eqn:r13} into \eqref{eqn:r46} leads to the desired result.

\section{Proof of Theorem \ref{th2}} \label{AP2}
Let $s_0$ denote the sensing rate of each device. Based on the active constraints \eqref{eqn:r18d} and \eqref{eqn:r18e},  we can rewrite the objective function of Problem (P1.2) as
\begin{align}
\sum_{n=1}^{N} \sum_{i=2}^{N}\sum_{k=1}^{i-1}x_{n,i} s_0 \omega_n t_k+\sum_{n=1}^{N} x_{n,1} s_0 \omega_n t_1^s.\label{eqn:r25}
\end{align}
For notational simplicity, we use $n_i$ to denote the device index with $x_{n_i, i}=1$ inside an offloading sequence $\{x_{n,i}\}$. Then, \eqref{eqn:r25} can be equivalently re-expressed as
\begin{align}
s_0&\left(\omega_{n_1}t_1^s+ \sum_{i=2}^N\omega_{n_i}\sum_{k=1}^{i-1}t_k\right),\label{eqn:r26}
\end{align}
where $\omega_{n_i}=\sum_{n=1}^N x_{n,i}\omega_n$ denotes the weight of the device scheduled at slot $i$. Let us start with the scenario of two devices (i.e., $N = 2$). Using the closed-form solutions derived in the last subsection, we have the sum offloading throughput as:
\begin{align}
S(2)=s_0(T-t^c)\frac{(\omega_{n_1}\!+\!\omega_{n_2})r_{n_1}r_{n_2}+s_0\omega_{n_2}r_{n_2}}{(s_0+r_{n_1})(s_0+r_{n_2})}. \label{eqn:r27}
\end{align}

We can observe that to maximize the sum throughput, the optimal scheduling order should satisfy $\omega_{n_1}r_{n_1}\leq \omega_{n_2,2}r_{n_2,2}$. Hence, Theorem \ref{th2} holds for the two-devices case, which is a starting point from which to begin the induction.

Suppose that scheduling in an ascending order of the devices' weighted transmission rates (i.e., $\omega_{n_1}r_{n_1}\leq \omega_{n_2}r_{n_2} \leq \cdots \leq \omega_{n_{N-1}}r_{n_{N-1}}$) achieves the maximum sum offloading throughput $S(N-1)$ for the $(N-1)$ devices. Based on the assumption, we intend to deduce the same conclusion also holds for the $N$-devices case. Using the induction hypothesis $S(N-1)$, we have
\begin{align}
S(N)=S(N-1)\!+s_0(T-t^c)\frac{I_1\! + \!I_2 + \!\cdots \!+\! I_{N-1}}{\prod_{i=1}^{N}(s_0+r_{n_i})}, \label{eqn:r28}
\end{align}
where
\begin{align}
&I_1\!=\!(\omega_{n_N}r_{n_N}\!-\!\omega_{n_1}r_{n_1})s_0 \!\prod_{j=2}^{N-1}\!r_{n_j}, \label{eqn:r28a} \\
&I_2\!=\!(\omega_{n_N}r_{n_N}\!-\!\omega_{n_2}r_{n_2})(s_0 r_{n_1}\!\prod_{j=3}^{N-1}\!r_{n_j}\!+\! s_0^2\! \prod_{j=3}^{N-1}\!r_{n_j} ),\label{eqn:r28b}\\
&\quad \vdots \nonumber\\
&I_{N-1}\!=\!(\omega_{n_N}r_{n_N}\!-\!\omega_{n_{N-1}}r_{n_{N-1}})\nonumber\\
&(s_0 r_{n_1}\!\prod_{j=1}^{N-2}\!r_{n_j}\!+\!\cdots\ \!+\! s_0^{N-2}r_{n_1}\!+\! s_0^{N-2}r_{n_2} \!+\! \cdots+ \nonumber\\
&s_0^{N-2}r_{n_{N-2}} \!+\! s_0^{N-1} ). \label{eqn:r28N}
\end{align}

Obviously, maximizing $S(N)$ implies that $\omega_{n_N}r_{n_N}\ge\omega_{n_i}r_{n_i}$ should be met for all $i = 1, \cdots, N-1$. Since $\omega_{n_1}r_{n_1}\leq \omega_{n_2}r_{n_2} \leq \cdots \leq \omega_{n_{N-1}}r_{n_{N-1}}$, we finally have $\omega_{n_1}r_{n_1}\leq \omega_{n_2}r_{n_2} \leq \cdots \leq \omega_{n_{N-1}}r_{n_{N-1}} \leq \omega_{n_N}r_{n_N}$. Consequently, the general result follows by the Principle of Mathematical Induction thus completes the proof.

\end{appendices}

\bibliographystyle{IEEEtran}
\bibliography{IEEEabrv,bill}

\end{document}